\author{Ka.Shrinivaasan \\ Chennai Mathematical Institute (CMI), Chennai, India \\ (shrinivas@cmi.ac.in) \\ \\ advised by \\ Dr.B.Ravindran, Indian Institute of Technology (IIT), Chennai \\ and \\ Dr.Madhavan Mukund, CMI \\ (ravi@cse.iitm.ac.in, madhavan@cmi.ac.in)}
\title{Few algorithms for ascertaining merit of documents and their applications}
\newtheorem{defn}{Definition}
\newtheorem{theorem}{Theorem}
\begin{document}
\maketitle
\newpage
\begin{abstract}
Existing models for ranking documents(mostly in world wide web) are prestige based. In this article, three algorithms to objectively judge the merit of a document are proposed - 1) Citation graph maxflow 2) Recursive Gloss Overlap based intrinsic merit scoring and 3) Interview algorithm. A short discussion on generic judgement and its mathematical treatment is presented in introduction to motivate these algorithms.
\end{abstract}

\section {Introduction and Motivation}
Motivation for objective, independent judgement of a document is founded on the following example: 
\begin{quote}
Judge X decides about the merit of an entity Z purely by what other entities opine about Z without interacting with Z; Judge Y decides about the merit of Z by interacting only with Z. Question now is who is better judge - X or Y. 
\end{quote}
Probability of judgmental error of judge X is equal to probability of collective error of entities opining about Z while probability of judgemental error of judge Y is 0.5 as the following elementary arithmetic shows. Let us assume there are $2n$ voters and they need to decide/vote on  whether a candidate is good or bad. A candidate getting majority ( $n + 1$ good votes) will be winner.
\begin{quote}
Question: What is the probability that people have made a good decision? 
\end{quote}
\begin{quote}
Answer: Probability of each voter making a good decision is $p$ and bad decision is $1 - p$ ( 0 <= p <= 1). Let p = 0.5 for an unbiased voter.
\end{quote}
So for a candidate to be judged 'good', atleast $n + 1$ people should have made a good decision. Probability of a good choice for these $2n$ voters, skipping the calculations, is :
\begin{multline}
P(good) = ((2n)! / 4^n ) * (( 1/((n+1)!(n-1)!) + \\
			1/((n+2)!(n-2)!) \\
			+ ...... + 1/((n+n)!(n-n)!)))
\end{multline}
If there is an objective judgement without voting, probability of good decision is 0.5. It is interesting to see that above series tends to 0.5 as n grows infinitely. Thus, the judgement-through-majority-vote error probability is equal to the error probability of judge X who uses only the inputs from witnesses to judge Z while judgement-through-interaction(without election) error probability is equal to the error probability of judge Y (i.e. 0.5) who does not use witnesses. Thus, both judges X and Y are equally fallible but the cost incurred in a real world scenario for simulating X far outweighs that of Y.Thus it is worth delving into schemes for objective judgement like Y. 
\section {Three algorithms presented hereunder}
\begin{enumerate}
\item Maxflow and Path lengths of Citation graphs - objective judgement (differs from Pagerank since it is Maxflow based and not prestige based)
\item Generalized Recursive Gloss Overlap - objective judgement (simulates judge Y with a 'white-box', invasive, intrinsic merit scoring) - covers majority of this report
\item Interview algorithm - objective judgement (simulates judge Y; Uses questions and answers to judge a candidate - 'black-box' and less-invasive - and also incorporates intrinsic merit score obtained from either MaxFlow of Citation graph or Generalized Recursive Gloss Overlap)
\end{enumerate}
\section {Directed Graph of Citations}

\subsection {Average Maxflow and Path lengths of Directed Graph of Citations} 
Given a corpus, algorithm constructs directed graph of incoming links to a document x from those documents chronologically later than x.Thus corpus is partitioned into set of digraphs. Indegree of a vertex in this digraph reflects the importance of a document represented by a vertex. This digraph can be thought of as a flow network where concept flows from a document to others which cite. Each edge has a weight. Flow/weight for an (u,v) edge is defined as number of references v makes while citing u though there could be other ways to weight an edge. Assigning polarity to this weight is discussed in 4.2. Mincut of the digraph is the set of documents which are "potentially most influenced by the source document" (because maximum flow of concept from source occurs through this set to outside world/sink). Thus size of maxflow/mincut, averaged over all vertex-pairwise maxflow values, is a measure of influence of a source document in a community and thus points to its merit. (E.g., Chronology for web documents can be found by 'Last-modified' HTTP header which every dynamic document server is mandated to send to client). Alternative way to get the merit is to count the number of vertices in a predefined radius from source (i.e set of paths of some fixed length from source) which can be less accurate and sometimes misleading. Thus documents can be ranked using average Maxflow values. Advantage of this scheme is that it quantifies the extent of percolation of a concept within a community through Maxflow, without giving importance to the prestige measure of the vertices(documents) involved. So, this is one way of objectively assessing the merit of a vertex(document). Implementation applies Ford-Fulkerson algorithm to each \(s,t\) distinct pair and finds the average maxflow out of each vertex.

\subsection {Polarity of citation edge}
Parse the document/sentence containing the citation/link into tokens and find polarity. Whether a word is positive or negative can be decided by:
\begin{enumerate}
\item looking up a sentiment annotated ontology (e.g positivity/negativity of a lemma in SentiWordNet) or 
\item entropy analysis - using $\sum_{i=0}^1(-P(i)logP(i))$ where P(0) = percentage of positive words and P(1) = percentage of negative words. Closer the entropy to zero, clearer the sentence/document on its viewpoint (very good or very bad) or 
\item recursive gloss overlap algorithm to the citing document to get the polarity/sentiment of context citing the document. 
\end{enumerate}
Implementation tries all the three above. If the polarity/sentiment is negative, the weight for edge (u,v) is made negative in citation digraph, indicating a negative flow of concept to vertex v from the cited vertex u.
\section{Definition Graph Convergence(or)Generalized recursive gloss overlap}
\subsection {Motivation for computing Intrinsic Merit of a document}
Intrinsic merit is defined as the amount of intellectual effort put forth by the reader of a document and we try to quantize this effort. It is important to note that this quantized effort is independent of any observer/link-graph. Any document goes through some human understanding and we try to model it through what can be called Iceberg/Convergence/Generalized recursive gloss overlap algorithm (named so because a web document contains only a tip of the knowledge a document represents and understanding the document requires deeper recursive understanding of the facts or definitions the document is home to.).For example, going through a research paper requires the understanding of the concepts which draw a logical graph in our mind. Thus time spent on grasping the concepts and hence the intrinsic merit is proportional to the size and complexity of this graph and points to its merit (which is equal to the intellectual effort of the human reader). Since WordNet is the existing model for semantic relationship, we will try to establish that a text document can be mapped to a graph which is a subgraph of WordNet and merit can be derived applying some metrics on this graph. This is the intuition behind the algorithms that follow.

\subsection {Definition tree of a document}
Given a document its definition tree is recursively defined as

\begin{defn}
definitiontree(all keywords of document) 
= definitiontree(term1)
   definitiontree(term2)
...definitiontree(termn) where term1, term2,...termn 
occur in the definition of keywords of a document.
\end{defn}
For example, let us consider the following document which talks about Kuratowski theorem

\begin{quote}
Document1 = Every K5,K3,3-free graph is planar
\end{quote}

This document contains key terms like "K5,K3,3-free", "graph" and "planar".Now we recursively construct the definition tree for these terms. Key terms are decided after filtering out stopwords and by computing TF-IDF and only terms above a threshold tfidf are chosen for constructing the definition graph.

definitions at level 1:
\begin{enumerate}
\item K5 = Complete graph of 5 vertices (key terms: graph, vertices)
\item K3,3 = graph of two sets of 3 vertices each interconnected (key terms: graph, two sets, vertices, interconnected)
\item graph = set of vertices and edges among them (key words: vertices, edges, set)
\item planar= graph embedded on a plane (key words: graph, embedded, plane)
\end{enumerate}

Thus the definition tree goes deeper as each keyword/concept is dissected and understood. Given above is level-1 grasping of the document. Important thing to note is that intersection of the sets of keywords in the definition of K5, K3,3, graph and planar is not an empty set(glosses for two or more keywords overlap). For example, intersection of definitions of K5 and K3,3 is the set \{graph, vertices\}. Thus the overlap of the terms "graph" and "vertices" in two definitions of K5 and K3,3 is an indication of deeper cohesion/interrelatedness of the terms in the document.Thus the replicated terms(represented by vertices) in the definition tree can be merged to get convergence (gloss overlap generalized to more than two glosses). Thus the definition tree is transformed into definition graph(since a vertex can have more than one parent) by merging replicated keyword vertices into 1 vertex.
Synset definitions in WordNet gloss are used for getting keyword definitions in the implementation. But WordNet Gloss does not work for terms specialized for a domain(e.g gloss for "graph" does not have a synset for graph theory as part of its senses set). This requires ontologies for the class the document belongs to. Thus recursive gloss overlap algorithm is limited by WordNet in present implementation.
At each level, word sense disambiguation is done by following Lesk's algorithm adapted to Generalized Recursive Gloss overlap to choose the synset definition fitting the context.
It is important to note that 1) only one relation ("is in definition of") is used and 2) only keywords within the document are considered 3) gloss overlap is computed recursively at each level of understanding till required depth is reached.

\subsection {Definition graph convergence and steps of Recursive Gloss Overlap algorithm}

Convergence of a document is defined as the decrease in the number of unique vertices of the set of definition trees of its keywords from level k to level k+1
For example definition tree of the above document converges to \{edges, vertices\} after expanding the definition tree further down. Thus the above document has "edges" and "vertices" as its undercurrent. Thus the Convergence algorithm takes no labelled examples for inference. Only requirement is to have a dictionary/gloss/ontology of terms and their corresponding definitions.
If a documents definition tree does not converge within a threshold called "depth" number of levels then the document is most likely less meaningful or of low merit. Thus the Convergence algorithms strikingly adapts an iceberg which has seemingly unconnected set of "tips" at the top but as we go deeper get unified. Level where this unification happens is a differentiator of merit.
If while recursively expanding the definition tree, a vertex results in a child vertex which is same as some sibling of the parent then we compute and
remove the intersection of keywords at present and previous level - since these common vertices have already been grasped. Accordingly, number of edges, vertices and relatedness are updated for each level. Number of vertices are adjusted for removal of common tokens, but number of edges remain same since they just point to a different vertex at that level. This process continues top-down till required depth is reached.

Steps:
\flushleft
\begin{enumerate}
\item Get the document as input
\item currentlevel = 1
\item keywordsatthislevel = \{keywords from the document through tfidf filter ( e.g > 0.02)\}
\item While (currentlevel < depthrequired) \{
\begin{itemize}
\item For each keyword from keywordsatthislevel lookup the best
matching definition for the keyword and add to a set of tokens in next level - requires WordSenseDisambiguation - implementation uses Lesk's algorithm
\item Remove common tokens with previous levels since they have been grasped in previous level (this is an optimization)
\item Update the number of vertices, edges and relatedness (vertices correspond to unique tokens, edges correspond to the single relation 'y is in definition of x' and relatedness is linear overlap or quadratic overlap) and Update tokensofthislevel
\item currentlevel = currentlevel + 1
\end{itemize}
\} 
\item Output the Intrinsic merit score =\[ |vertices|\cdot| edges|\cdot|relatedness| / firstconvergencelevel\]
Where
\begin{itemize}
\item Relatedness =$NumberOfOverlaps$ (linear, also called as convergence factor) (or)
Relatedness = $NumberOfOverlappingParents * NumberOfOverlaps^2$ (quadratic)
\item firstconvergencelevel = level of first gloss overlap
\end{itemize}
\end{enumerate}

At the end of recursive gloss overlap, nodes with high number of indegrees(parents) are indicators of the class of the document since greater the indegree, greater is the number of keywords overlapping (voting for an underlying theme).From graph theoretic view, Definition Graph constructed above is a multipartite graph since vertices can be partitioned into sets with no edges within a set and edges only across sets(without removal of common tokens between levels - which is only an optimization  since by removing common tokens we redirect edges to vertices within the same set and multipartiteness is lost). Preserving multipartiteness is useful since it groups the tokens at each level of recursion into single set with edges across these sets - multipartite cliques of this multipartite graph can be analyzed to get the robustness.  Moreover, this algorithm ignores grammatical structure. Reason is that principal differentiator in analyzing relative merit of two documents is the quality of content and complexity of content and both documents are equally grammatical.Quality of content is proportional to the vertices of the definition graph and complexity of the content is proportional to the relatedness and edges of definition graph.
In spite of ignoring grammatical structure, the graph constructed above is context-sensitive since word sense disambiguation is done while choosing the synset matching a keyword. This way, the definition graph is a graph representation of the knowledge in the document sans the grammatical connectives.

\subsection {Definition of shrink}
\begin{defn}
Let us define "shrink" to be the amount of decrease in the number of unique vertices between levels $k$ and $k+1$ during convergence (gloss overlap)
\end{defn}

\subsection {Comparison of two documents for relative merit - two examples}
\begin{quote}
Document1 : Car plies on sky
\end{quote}
Constructing definition graph for level-1 we get,
\begin{enumerate}
\item Car - automobile used for surface transport
\item plies - is flexible; goes on a surface; moves
\item sky - atmosphere; not on earth;
\end{enumerate}
As can be readily seen there is overlap of 2 key terms at level 1 of the tree and thus there is less gloss overlap. Thus at level-1 document looks less meaningful.

\begin{quote}
Document2 : Cars and buses ply on road
\end{quote}
Constructing definition graph for level-1 we get,
\begin{enumerate}
\item Car - automobile used for surface transport
\item Buses - automobile used for surface transport
\item ply - flexible; go on a surface; move
\item road - asphalted surface used for transport
\end{enumerate}
All 4 keywords overlap giving {surface} as common token in their respective glosses. Overlap is better than Document1, since more keywords contribute to overlap.

\subsection {Intrinsic merit score, Convergence factor and Relatedness}
\begin{defn}
Let us define Intrinsic merit I to be the product of number of vertices(V), number of edges(E) and Convergence factor(C) of the definition graph of the document.
\begin{equation}
I = V*E*C 
\end{equation}
\end{defn}
Convergence factor (C) is the difference between number of vertices in definition tree and number of vertices in definition graph (V). Number of vertices in definition tree includes overlapping vertices without coalescing them ( since after coalescence we get the definition graph).Number of vertices in the definition tree = $x^d - 1$  where x is the average number of keywords per term definition and d is the depth of the definition tree of the document. Let us add 1 to this to get $x^d$ (smoothing).
Number of vertices in the definition graph = V
Thus the Convergence factor C and Intrinsic merit I become,
\begin{equation}
C = x^d - V
\end{equation}
\begin{equation}
I = V*E*(x^d - V)
\end{equation}
Intrinsic Merit score can also be further fine-tuned by taking into account the level of definition tree at which first convergence(gloss overlap) happens, defined as firstconvergencelevel. Greater the firstconvergencelevel, more irrelevant the document "looks" (but has a deeper cohesion). Depth to which definition tree has to be grown is decided by extent of grasp needed by the reader. Thus greater the depth of definition tree, greater is the understanding. It is obvious to see that Depth has to be greater than firstconvergencelevel so that some pattern can be mined from the document. Heuristically, we can grow the definition tree till intersection of leaves of all sub-trees of the keywords in the document is non-empty. This is the point where we can safely assume that all keywords in the document have been somehow related to one another.
So, Intrinsic merit score can be improved by incorporating firstconvergencelevel denoted by f. Thus improved score is
\begin{equation}
I = V \cdot E \cdot (x^d - V) / f
\end{equation} 
(since merit is inversely proportional to firstconvergencelevel) .Complexity of constructing definition tree is O($x^d$). Since non-unique vertices are coalesced(through gloss overlap), definition graph can be constructed in O(V) time (subexponential).
Since x is the average number of children keywords per keyword, $x= E/V$. Substituting,
\begin{equation}
I = E*V*(E^d-V^d) / (V^d* f) 
\end{equation}
As an alternative to convergence factor, gloss relatedness score similar to the one discussed by Banerjee-Ted, but considering only one relation, number of overlapping parents and length of overlap can be used to get the interrelatedness/cohesion of the document. Replacing the convergence factor with relatedness,
Intrinsic merit becomes, $I = V \cdot E \cdot Rel / f $ where $Rel$ is the sum of relatedness scores, computed over all overlapping glosses at each convergence level and f is the level at which first gloss overlap occurs 
\begin{equation}
Rel= \sum_{i=1}^n(relatedness(Level(i), keyword1,keyword2,...,keywordn))
\end{equation}
This relatedness score has been generalized to overlap of more than two glosses with single relation R (R(x,y) = y is in definition of x).
Function relatedness() for n-overlapping keywords is defined as,
\begin{multline}
relatedness(Level(i), keyword1, keyword2,...\\
 , keywordn) = OverlapLengthAtLevel(i) \\
(Linear Overlap) 
\end{multline}
(or)
\begin{multline}
relatedness(Level(i), keyword1, keyword2,...\\
 , keywordn) =n\cdot(OverlapLengthAtLevel(i) ^2)\\
(Quadratic Overlap)
\end{multline}
The relatedness score reflects the convergence since it takes into account the overlapping keywords at each level and length of the overlap. Thus first version of relatedness() function, implies the convergence factor (difference in number of vertices of definition tree and definition tree, signifying overlap)
Intrinsic merit/Relatedness score can be used to rank the set of documents and display them to the user.
Referring back to examples in 5.5, quadratic relatedness measure ((9) above)
is a better choice than linear overlap since it is a function of both overlapping parents and the overlap length. The quadratic overlap gives greater weightage to length of overlap by squaring it while keeping the number of parents involved linear.

\subsection {Intuition captured by above intrinsic merit score}
The number of edges (representing relation between parent term and its definitions) increase as relationship among vertices of definition graph increases. The number of vertices(keywords) in the definition graph increases, as the knowledge represented by the document increases. The depth of the definition tree increases, as the understanding grows.
Convergence factor increases as number of overlapping terms in definition graph increases. Similarly quadratic relatedness score increases with number of keywords involved in overlap and the length of overlap, thus pointing to stronger semantic relationship among the keywords.
Intuitively, definition graph is WordNet(or any other ontology) projected onto the document.

\subsection {Breadth/Depth first search of definition graph and why it is not a good choice for computing merit score}
Since Breadth/Depth first search of graph can model human process of thinking, BFS/DFS algorithms can be applied to get the merit score. Since BFS/DFS algorithms run in O($V+E$) time merit score is proportional to $V+E$ - all vertices of the graph are visited in O($V+E$) time. But the drawback of this approach is that strength of underlying theme of the document and cohesion of keywords is not captured by this merit score. Since Intrinsic merit score obtained by Convergence reckons with depth and overlapping keywords, BFS/DFS merit score is discarded

\subsection {Sentiment analysis applying Recursive gloss overlap}
Recursive Gloss Overlap algorithm after few levels down the definition tree would spell out the sentiment of writer.
\begin{quote}
Example1: "That movie was fantastic; Graphics was awesome"
Keywords at level-1 of Definition graph construction:
\begin{enumerate}
\item movie - motion picture; positive
\item fantastic - good, excellent; positive
\item graphics - software technique; positive
\item awesome - good, great; positive
\end{enumerate}
\end{quote}
Overlapping terms are \{good, positive\} and large number of keywords(parents) contribute to this overlap. Thus the document is of extolling nature about some target entity. Prerequisite is a dictionary which annotates each word with the sentiment and sense of the word(Implementation uses SentiWordNet which gives positivity/negativity for each lemma).
Sentiment analysis with Recursive Gloss Overlap is applied to finding the polarity of an edge in Citation graph (See 3.2). Recursive Gloss Overlap algorithm is applied to each Citation context and a definition graph is constructed. Keyword vertices with more than one indegree are then tested for positivity and negativity using SentiWordNet. If majority of these is positive then polarity for citation edge is positive, otherwise negative.

\subsection {False negatives}
Convergence algorithm never assigns lower merit score to a document which deserves a higher merit since a document with higher merit explains the concept with more depth/cohesion than document with lower merit. So false negatives do not exist

\subsection {False positives}
False positives exist since both a document and its arbitrarily jumbled version will get same merit score. This is prevented by assuming grammatically correct documents or by preprocessor which does parts of speech parsing to validate the grammatical structure of the document.

\subsection {Normalization}
Intrinsic merit can be compared only if the compared documents are of same class. Thus 2 documents explaining special relativity can be compared while a document on journalism can not be compared with a document on special relativity. Intrinsic Merit scores can be normalized by,
\begin{equation}
Normalized Intrinsic Merit Score = Score / Maximum Score
\end{equation}

\subsection {Ordering and Relative Merit}
\begin{defn}
Document1 is more meritorious than document2 if
\begin{enumerate}
\item document1 has more keywords that need to be understood than those of document2,
\item cohesion/interrelation of the keywords in document1 is more than that of document2,
\item average number of keywords per definition is greater for document1 than document2,
\item firstconvergencelevel(level at which first gloss overlap occurs) of document1 is less than that of document2 and
\item depth of definition tree of document1 is greater than that of document2.
\end{enumerate}
If we want a weaker definition of the above, ranking may be a partial order(where some pairs of documents may not be comparable) than a total order.This appeals to intuition since document1 may be better in some aspects but worse in some other relative to document2
\end{defn}
\subsection {Semantic relatedness or Meaningfulness of a document}
\begin{defn}
A document is meaningful to a human reader if any pair of keywords in the document are within a threshold WordNet distance e.g Jiang-Conrath distance
\end{defn}
\subsection {Formal proof of correctness of Convergence and Intrinsic Merit Score}
\begin{theorem}
If a document lacks merit, convergence(or gloss overlap) does not occur
(Corollary: Document's merit is measured by extent of convergence)
\end{theorem}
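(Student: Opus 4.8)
The plan is to prove the contrapositive: rather than arguing directly that a meritless document fails to converge, I would show that whenever gloss overlap \emph{does} occur the document carries exactly the cohesion that the ordering criteria of Section 4.13 identify with merit. The first step is to pin down ``lacks merit'', since the paper gives several informal characterisations. I would anchor the argument on two of them at once: the quantitative score $I = V \cdot E \cdot C$ with convergence factor $C = x^d - V$, and the semantic criterion (the meaningfulness definition of Section 4.14) that a document is meaningful only if every pair of keywords lies within a threshold WordNet distance. ``Lacks merit'' will be read in its strongest form, namely that \emph{no} keyword pair lies within the threshold, so the keyword set is entirely incohesive; the corollary will then recover the graded statement.

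The second step is to model the recursive gloss expansion as bounded reachability on the WordNet digraph under the single relation $R(x,y)$ meaning ``$y$ is in the definition of $x$''. The definition tree of a keyword grown to depth $d$ is precisely the set of terms reachable from it within $d$ hops of $R$; I would make this identification explicit and observe that, by the definition of \emph{shrink} in Section 4.4, a shrink between levels $k$ and $k+1$ occurs exactly when two such reachable sets first intersect. The lemma to establish is the equivalence: keywords $a$ and $b$ overlap at level $k$ if and only if there is a common term $t$ with $R$-paths of length at most $k$ from each of $a$ and $b$ to $t$, which forces a WordNet path between $a$ and $b$ of length at most $2k$.

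Third, I would combine the two ingredients. If the document lacks merit in the strong sense, \emph{every} keyword pair has WordNet distance above the threshold; choosing the depth bound commensurate with that threshold, no pair's reachable sets can meet within the allowed depth, so no shrink occurs anywhere. Consequently the unique-vertex count never drops, $V = x^d$, and $C = x^d - V = 0$; then $I = V \cdot E \cdot 0 = 0$, so convergence does not occur and the merit score vanishes, which is the theorem. The corollary drops out of the same computation: $C$, and equivalently the relatedness $Rel$, increases monotonically with the number and depth of overlaps, so the \emph{extent} of convergence directly scales the merit rather than merely flagging its presence.

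I expect the main obstacle to be the quantitative calibration in the third step: proving that unrelated keywords genuinely fail to overlap \emph{within the finite depth bound}, not just eventually. This needs a tight relation between the WordNet metric and hop-distance under $R$, plus an argument that glosses of semantically distant terms do not accidentally share high-tfidf tokens before the depth cutoff. A subtler point is that the false-positive discussion of Section 4.11 shows a jumbled document can still converge, so the honest statement is strictly one-directional; I would flag that the theorem presumes the grammatical-correctness assumption invoked there, and that throughout, ``merit'' must be understood as semantic cohesion rather than raw token content.
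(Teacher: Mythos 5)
Your proposal follows essentially the same route as the paper's own proof: the paper likewise formalises the expansion via the single relation $R$ (``$y$ is in the definition tree of $x$''), characterises convergence as the existence of a common descendant shared by two keywords (its relation $S$), and identifies lack of merit with failure of the WordNet-threshold meaningfulness definition, so that no related pair means no shared vertex and hence no overlap, with the corollary tied to the monotone growth of relatedness and the role of firstconvergencelevel. Your version is tighter where the paper is loose --- the explicit $2k$ path bound, the computation $C = x^d - V = 0 \Rightarrow I = V \cdot E \cdot C = 0$, the deliberate adoption of the strong reading of ``lacks merit'', and the calibration between hop-distance and the WordNet metric are all points the paper silently assumes rather than proves --- but these are refinements of the same argument, not a different one.
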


\begin{proof}
By "meritorious" document, we imply a document which is meaningful as per the definition of meaningfulness above(i.e. keywords in a document are separated within threshold WordNet distance metric like Jiang-Conrath distance). Let us denote R as a relation "is descendant of". If xRy then y is in (gloss)definition tree of keyword x(i.e y is descendant of x). If definition trees of keywords of the document are disjoint, then there is no y such that xRy and zRy for two keywords x and z. Let us define the relation S to be "two keywords are related". xSz iff xRy and zRy for some y. Thus we formalise cohesiveness/meaningfulness of a document in terms of definition graph. If a document is not meaningful then there exist no x and z such that xSz, which implies that for no y, xRy and zRy. Thus there exist no vertex y which is in definition tree of two key words. Thus convergence is a necessary condition for merit. The relation S implies that there exists a path between two keywords x and y in the document, through some intermediate nodes which are in the definition/gloss tree of x and y. There exists a threshold WordNet distance greater than length of this path since the length is finite and whether a document is meaningful depends on this threshold. Thus convergence(generalized gloss overlap) implies meaningfulness of a document as per the definition above. Moreover Intrinsic merit increases with number of edges and relatedness() - linear or quadratic. So with greater relatedness() and more number of vertices and edges, overlaps and number of nodes involved in overlap increase. This in turn implies that more number of paths are available amongst the keywords of the document since every overlap acts as a meeting point of two keyword definition trees. Probability that lengths of these paths are less than threshold WordNet distance is inversely proportional to firstconvergencelevel(level of first gloss overlap). Thus intrinsic merit score discussed earlier captures this notion.
\end{proof}

\subsection {Parallelizability}
Recursive gloss overlap is parallelizable by partitioning the tokens at each level and assigning each subset to different processors (Map) to get the tokens for next level. Individual results from processors are merged (Reduce) to get the final set of tokens for a level. This is repeated for all levels. MapReduce can be applied for parallelism.

\section {Interview Algorithm (applying (1) and/or (2) for computing intrinsic merit)}

\subsection {Motivation for Interview algorithm}

Here we map the real world scenario of an interview being conducted on a candidate where a panel asks questions and judges the candidate based on the quality of answers by candidate - candidate is a document and it is "interviewed" by a reference set of authorities.
Each document x is interviewed/evaluated by set of reference documents which will decide on the merit of the document x. Reference set initially consists of n user chosen authorities on the subject. Interview is set of queries made by reference set on the document and evaluating the answer to the queries. If
x passes the interview it is inducted into reference set. Next document will be interviewed by n+1 documents including last selected document and so on.
Hierarchy of interviews can be built. For example Document x interviews documents y and z. Document y interviews w and document z interviews p. Thus we get a tree of interviews (it could be a directed acyclic graph too, if a candidate is interviewed by more than one reference, one of which itself was a candidate earlier). The interview scores can be weighted and summed bottom-up to get the merit of the root (Analogy: hierarchy in an organization).

\subsection {Steps of the Interview algorithm}

\begin{enumerate}
\item Relevance of the document to the reference set is measured by a classifier (NaiveBayesian or SVM or search engine results for a query)
\item Intrinsic merit score of the document is computed either by Recursive Gloss Overlap algorithm (measures the meaningfulness/sanity of the candidate) (or) by citation digraph
\item Reference set interviews the candidate and gets the score
\item Value addition of the candidate document is measured (what extra value candidate brings over and above reference set)
\item Candidate is inducted into reference set based on the above criteria if candidate is above a threshold.
\end{enumerate}

\subsection {Mathematical formulation of an interview}

Interview is abstracted in terms of a set of tuples, where each tuple is of the form 
\begin{equation}
t(i) = (question, answer, expectedanswer, score)
\end{equation}
for question i.
\begin{equation}
Interview (I) = \{t(1), t(2), t(3), ..., t(n)\}
\end{equation}
\begin{equation}
t(i).score = PercentageOfMatch(t(i).answer, t(i).expectedanswer)
\end{equation}
\begin{equation}
if (\sum_{i=1}^n (t(i).score)) > referencethreshold
\end{equation}
then induct the document into reference set.In the Information Retrieval context, a question is a query and the answer is the context within the document that matches the query. The answer returned by the document is then compared with expectedanswer. Comparison is done by Jaccard coefficient of shingles (n-grams)
\begin{multline}
t(i).score = \\
|shingle(answer) \bigcap shingle(expectedanswer)| / \\
 | shingle(answer) \bigcup shingle(expectedanswer)| 
\end{multline}
\begin{enumerate}
\item Supervised: In supervised setting, each reference document is pre-equipped with user-decided set of queries and answers it expects. Thing to note is that a document is made a live object - it both has content and questions it intends to ask(set of search queries). Alternative way to compute t(i).score is to find out the definition graph of answer and expectedanswer and compute the difference between the two graphs(e.g edit distance). Downside of this is the assumption of pre-existence of correct answers which makes this a supervised learning.
\item Unsupervised: In the absence of reference questions and answers, questions that a document "intends" to ask can be thought of as set of queries for which the document has better answers(results). These set of queries/results (questions and answers) can be automatically obtained from a document through an unsupervised way by computing set of more likely to be important n-grams(by computing key phrases with tfidf above threshold) and the context of the n-grams in the reference documents.These n-grams/contexts can later be used as "references questions" (n-grams) and "reference answers"(contexts of the corresponding n-grams) to the candidate document. Thus we compensate for lack of reference Questions and Answers. Alternatively, an interview can be simply considered as the percentage similarity of definition-graph(reference) and definition-graph(candidate) obtained by edit distance.
\end{enumerate}
\subsection {Searching for answer to a query within the document (as implemented)}

If a document describing tourist places is given and the query is "What are the good places to visit in this city?", then query is parsed into key words like "good", "places", "visit" and "city" and matching contexts within the document are returned where context is the phrase of length $2n+1$ (from $x-n$ to $x+n$ locations with location of keyword being x).

\subsection {Value addition measure}

Recursive gloss overlap algorithm gives the definition graph of the candidate document. To measure the value addition we can run the recursive gloss overlap algorithm on the reference set to get the definition graph of reference set and find out the difference between the two definition graphs - reference and candidate. Since value addition is defined as the value added which is not already present, extra vertices and edges present in candidate but absent in the reference set are a measure of value addition. Value addition can be measured by either edit distance(cost of transforming one graph to the other after adding/deleting vertices/edges), maximum common subgraph or difference of adjacency matrices. Implementation uses graph edit distance measure.

\subsection {Update summarization through Interview algorithm (applying algorithm given in 6.2)}

Given a news summary and a candidate news to be added to summary
\begin{enumerate}
\item Label the summary as reference set.
\item Run a classifier on summary and candidate to get the class to which both belong to (or get from search engine results on a news topic)
\item If class(summary) == class(candidate) proceed further
\item Calculate intrinsic merit score of the candidate news document through recursive gloss overlap algorithm described in (2) (or) from citation digraph described in (1)
\item Candidate news is interviewed by reference set(summary in this case)
\item Compute value addition of candidate to summary
\item Add the value added information from candidate into existing summary to get new summary (by getting cream of sentences with top sentence scores)
\end{enumerate}

\subsection {Application to Topic Detection, Link detection and Tracking}
\begin{enumerate}
\item Interview algorithm and graph edit distance measure can be applied to news topic link detection( Answers the question - Does a pair of news stories discuss same topic?). Since same news item falls under multiple topics and is changing over time, topic of a news story is in a state of flux. Given a pair of news stories (n1, n2) execute interview of n2 with n1 as reference. This interview score decreases and edit distance grows as n2 becomes more irrelevant to n1. By defining a threshold for interview and edit distance scores to belong to same topic, link detection can be achieved. It is important to note that interview score and value addition score are inversely related.
\item At any point in time, compute edit distance for all possible pairs Nx, Ny in a topic (after getting their respective definition graphs) and choose Ny which has largest edit distance to others and hence an outlier and least likely to be in the topic. Thus topic detection is achieved(Answers the question - Does this story exist in correct topic?).
\item Topic tracking can be done by constructing definition graph and finding vertices with high number of indegrees. These keywords are voted high and point to the maximum likely topic of the news story (works as an unsupervised text classifier).This process has to be periodically done since topic of a story might change and thus the definition graph will change.
\end{enumerate}
\subsection {Implementation in Python}
\begin{enumerate}
\item Get documents of same class/topic (from Reuters corpus or search engine query results) and their future references as input
\item Compute the intrinsic merit score for both documents:
\begin{itemize}
\item applying citation digraph construction (or) applying definition graph convergence(generalized recursive gloss overlap)
\item Parse into keywords and get keywords above a threshold tf-idf
\item Perform WSD using Lesk's algorithm
\item Get glosses of matching sense through wordnet api
\item get overlaps at level i, update intrinsic merit score (either using linear or quadratic overlap)
\item repeat  for sufficient number of levels defined by "depth"
\end{itemize}
\item execute interview if reference questions and answers are available (supervised) or through getting important n-grams/context described above (unsupervised) - at present restricted to 1-gram for keywords and bigrams for jaccard coefficient calculation
\item compute value addition through definition graph edit distance between reference and candidate, and get the score.
\item get percentage weighted sum of intrinsic merit, interview and value addition scores and get final score.
\item APPLY (2), (3) and (4) ABOVE TO NEWS UPDATE SUMMARIZATION: If final score is above threshold, update the news summary with candidate news and publish (sentence scoring is done by sum of tfidf scores of words in a sentence) 
\item APPLY (2) TO GET INTRINSIC MERIT RANKING: Compare the scores of the two documents from (2) and rank. (Citation graph based maxflow ranking internally uses sentiment analysis using one of 1) Recursive gloss overlap 2) SentiWordNet 3) Entropy analysis of citation context)
\item APPLY (3) and (4) ABOVE TO TOPIC DETECTION AND TRACKING
\end{enumerate}

\newpage
\section{Results}
\subsection {Results - Intrinsic Merit score with quadratic overlap for top 10 Google ranked documents for query 'data mining' sorted ascending}
Pagerank - (Document,relatedness,vertices,edges,firstconvergencelevel) - IMScore
\begin{itemize}
\item 6 - ('ThesisDemo-datamining-test6.txt', 477660, 372, 576, 1)   - 102349163520.0 
\item 10 - ('ThesisDemo-datamining-test10.txt', 139114790, 1172, 2339, 1)  - 3.81356486745e+14 
\item 8 - ('ThesisDemo-datamining-test8.txt', 310161784, 1456, 3034, 1) - 1.37014092147e+15 
\item 7 - ('ThesisDemo-datamining-test7.txt', 310161784, 1456, 3034, 1) - 1.37014092147e+15 
\item 5 - ('ThesisDemo-datamining-test5.txt', 51304180926L, 2938, 10643, 1) - 1.60423730814e+18 
\item 4 - ('ThesisDemo-datamining-test4.txt', 99651694978L, 3324, 12921, 1) - 4.27998090689e+18 
\item 9 - ('ThesisDemo-datamining-test9.txt', 133686525217L, 3186, 13468, 1) - 5.73636152749e+18 
\item 3 - ('ThesisDemo-datamining-test3.txt', 354003740698L, 3901, 18039, 1) - 2.49112924394e+19 
\item 2 - ('ThesisDemo-datamining-test2.txt', 594730534291L, 3935, 20502, 1) - 4.79801059042e+19 
\item 1 - ('ThesisDemo-datamining-test1.txt', 2753901168066L, 5832, 33386, 1) - 5.36204253324e+20
\end{itemize}

\subsection {Results - Intrinsic Merit score with quadratic overlap for top 10 Google ranked documents for query 'philosophy' sorted ascending}
Pagerank - (Document,relatedness,vertices,edges,firstconvergencelevel) - IMScore
\begin{itemize}
\item 3 - ('ThesisDemo-philosophy-test3.txt', 63840296, 1110, 2165, 1) -  1.53417807332e+14
\item 7 - ('ThesisDemo-philosophy-test7.txt', 456552729, 1234, 3041, 1) - 1.71325703153e+15
\item 5 - ('ThesisDemo-philosophy-test5.txt', 915190280, 1428, 3651, 1) - 4.77146166914e+15
\item 6 - ('ThesisDemo-philosophy-test6.txt', 1128268242, 1891, 4577, 1) - 9.76528235921e+15
\item 2 - ('ThesisDemo-philosophy-test2.txt', 2739304610L, 2033, 5316, 1) - 2.96048373426e+16
\item 10 - ('ThesisDemo-philosophy-test10.txt', 6630859968L, 2289, 6471, 1) - 9.82170869184e+16
\item 9 - ('ThesisDemo-philosophy-test9.txt', 7675201402L, 2105, 6477, 1) - 1.04644348307e+17
\item 8 - ('ThesisDemo-philosophy-test8.txt', 9692242200L, 2165, 6733, 1) - 1.41283281476e+17
\item 4 - ('ThesisDemo-philosophy-test4.txt', 14535833906L, 2553, 7920, 1) - 2.93911072979e+17
\item 1 - ('ThesisDemo-philosophy-test1.txt', 9611266377319L, 7552, 49449, 1) - 3.58922024377e+21
\end{itemize}

\subsection {Results - Intrinsic Merit score with quadratic overlap for human (2 judges) judged documents on topic 'democracy' - sorted ascending}
Human ranking - (Document,relatedness,vertices,edges,firstconvergencelevel) - IMScore
\begin{itemize}
\item {5,5} - ('ThesisDemo-democracy-test2.txt', 15535, 270, 406, 1) - 1702946700.0
\item {4,6} - ('ThesisDemo-democracy-test6.txt', 60534, 253, 373, 1) - 5712533046.0
\item {6,1} - ('ThesisDemo-democracy-test1.txt', 136281, 249, 384, 1) - 13030644096.0
\item {2,2} - ('ThesisDemo-democracy-test4.txt', 245448, 358, 568, 1) - 49910378112.0
\item {1,3} - ('ThesisDemo-democracy-test3.txt', 1623723, 364, 671, 1) - 396584600412.0
\item {3,6} - ('ThesisDemo-democracy-test5.txt', 1167039, 485, 847, 1) - 479413786005.0
\end{itemize}

\subsection{Results - Intrinsic Merit score with quadratic overlap for human (1 judge) judged documents on topic 'soap' - sorted ascending}
Human ranking - (Document,relatedness,vertices,edges,firstconvergencelevel) - IMScore
\begin{itemize}
\item  4 - ('ThesisDemo-soap-test4.txt', 52, 212, 346, 2) - 1907152.0
\item  3 - ('ThesisDemo-soap-test2.txt', 735, 113, 146, 1) - 12126030.0
\item  2 - ('ThesisDemo-soap-test3.txt', 1368, 109, 152, 1) - 22665024.0
\item  1 - ('ThesisDemo-soap-test1.txt', 2912, 188, 251, 1) - 137411456.0
\item  5 - ('ThesisDemo-soap-test5.txt', 25641, 230, 353, 1) - 2081792790.0
\end{itemize}

\subsection {Results - Intrinsic Merit score with quadratic overlap for top 7 Google news stories for query 'haiti earthquake' - sorted ascending}
Pagerank - (Document,relatedness,vertices,edges,firstconvergencelevel) - IMScore
\begin{itemize}
\item 4 - ('ThesisDemo-haiti-test4.txt', 11683630, 710, 1343, 1) - 1.11406917139e+13
\item 2 - ('ThesisDemo-haiti-test2.txt', 65287245, 1002, 2008, 1) - 1.31358981536e+14
\item 7 - ('ThesisDemo-haiti-test7.txt', 219493417, 1258, 2785, 1) - 7.69001771262e+14
\item 6 - ('ThesisDemo-haiti-test6.txt', 491851745, 1321, 3223, 1) - 2.09409962803e+15
\item 3 - ('ThesisDemo-haiti-test3.txt', 4268535180L, 1966, 5693, 1) - 4.7775315353e+16
\item 5 - ('ThesisDemo-haiti-test5.txt', 7043167094L, 2120, 6412, 1) - 9.57408693023e+16
\item 1 - ('ThesisDemo-haiti-test1.txt', 44329850203L, 3052, 10603, 1) - 1.434529734e+18
\end{itemize}

\subsection {Results - Intrinsic Merit score with quadratic overlap for top 10 Google ranked documents for query 'literary' sorted ascending}
Pagerank - (Document,relatedness,vertices,edges,firstconvergencelevel) - IMScore
\begin{itemize}
\item 5 - ('ThesisDemo-literary-test5.txt', 38252283, 1032, 1944, 1) - 7.67420361729e+13
\item 7 - ('ThesisDemo-literary-test7.txt', 815611695, 2020, 4386, 1) - 7.22609124643e+15
\item 3 - ('ThesisDemo-literary-test3.txt', 5989035631L, 2039, 5938, 1) - 7.25127400033e+16
\item 10 - ('ThesisDemo-literary-test10.txt', 6155411625L, 2467, 6713, 1) - 1.01939593415e+17
\item 8 - ('ThesisDemo-literary-test8.txt', 296376674293L, 4598, 18333, 1) - 2.4983111474e+19
\item 4 - ('ThesisDemo-literary-test4.txt', 529359994275L, 5074, 21758, 1) - 5.84413920691e+19
\item 2 - ('ThesisDemo-literary-test2.txt', 643163471944L, 4920, 22433, 1) - 7.09861839373e+19
\item 1 - ('ThesisDemo-literary-test1.txt', 1149789557857L, 5126, 25916, 1) - 1.52744272126e+20
\item 9 - ('ThesisDemo-literary-test9.txt', 2149531315027L, 6056, 31756, 1) - 4.13385687561e+20
\item 6 - ('ThesisDemo-literary-test6.txt', 3388627800057L, 6826, 36617, 1) - 8.4697952824e+20
\end{itemize}

\subsection {Results Excerpt- Intrinsic Merit Ranking of Reuters corpus in 'earn' category}
(Document,relatedness,vertices,edges,firstconvergencelevel) - IMScore
\begin{itemize}
\item ('test/15046', 34, 59, 93, 1) - 186558.0
\item ('test/14911', 55, 164, 219, 1) - 1975380.0
\item ('test/15213', 65, 169, 234, 1) - 2570490.0
\item ('test/15063', 105, 226, 331, 2) - 3927315.0
\item ('test/14899', 79, 199, 278, 1) - 4370438.0
\item ('test/15070', 107, 199, 306, 1) - 6515658.0
\item ('test/15185', 117, 210, 327, 1) - 8034390.0
\item ('test/15074', 116, 219, 335, 1) - 8510340.0
\item ('test/15103', 107, 259, 366, 1) - 10142958.0
\item ('test/14965', 125, 232, 357, 1) - 10353000.0
\end{itemize}
\subsection {Results - Spearman ranking coefficient and Pearson coefficient for the above rankings}
\begin{itemize}
\item (1) . Spearman ranking coeffiecient for Google ranking for 'data mining'  :  0.733333333333
\item (2) . Pearson ranking coeffiecient for Google ranking for 'data mining'  :  0.00888888888889
\item (1) . Spearman ranking coeffiecient for Google ranking for 'literary' :    0.0909090909091
\item (2) . Pearson ranking coeffiecient for Google ranking for 'literary' :    0.00110192837466
\item (1) . Spearman ranking coeffiecient for Google ranking for 'philosophy' :  0.0424242424242
\item (2) . Pearson ranking coeffiecient for Google ranking for 'philosophy' :  0.000514233241506
\item (1) . Spearman ranking coeffiecient for Google ranking for 'haiti earthquake'  :  0.25
\item (2) . Pearson ranking coeffiecient for Google ranking for 'haiti earthquake' :  0.00892857142857
\item (1) . Spearman ranking coeffiecient for Human ranking(2 judges) for 'democracy' :  0.385714285714
\item (2) . Pearson ranking coeffiecient for Human ranking(2 judges) for 'democracy' :  0.0174334140436
\item (1) . Spearman ranking coeffiecient for Human ranking(1 judge) for 'soap' :  0.9
\item (2) . Pearson ranking coeffiecient for Human ranking(1 judge) for 'soap' :  0.09
\end{itemize}
As per Spearman coefficient, correlations between Google ranking and Recursive Gloss Overlap are 73\%, 4\%, 9\% and 25\% while with human ranking they are 38\% and 90\%

\subsection {Results - Update Summarization applying Interview algorithm 
with Recursive Gloss Overlap - Example - Summary size is 12.5\%}
Candidate document:
\begin{quote}
1 Dead in Bangkok Protests, More Than 70 Wounded

VOA News22 April 2010
A Thai woman lies injured on the ground in Bangkok after several small explosions occurred near site of anti-government protests in Bangkok, 22 Apr 2010
Photo: AP

A Thai woman lies injured on the ground in Bangkok after several small explosions occurred near site of anti-government protests in Bangkok, 22 Apr 2010

Hospitals in Thailand's capital, Bangkok, say at least one person has been killed and many others wounded in a series of explosions near an encampment of anti-government protesters.

More than 70 people were reported wounded Thursday at the camp in the city's business district, which is packed with armed troops and diverse groups of protesters.  Reports say at least five hand grenades exploded, prompting the closure of a nearby train station.

The protesters have besieged central Bangkok for weeks, trying the patience of citizens and business leaders.  A coalition of groups gathered to drive the so-called Red Shirt protesters from the main retail and tourist district.  Police separated the two sides.

The Red Shirt protesters have rallied for five weeks, demanding new elections and the resignation of Prime Minister Abhisit Vejjajiva.

A coalition opposed to the Red Shirts - called the Multi-Colored Shirts - has announced a mass rally for Friday.

Most of the protesters support former Prime Minister Thaksin Shinawatra, who was ousted in 2006.  Mr. Thaksin lives in exile and faces a prison sentences on corruption charges in Thailand.  He has a significant following among the country's rural and low-income population.

Mr. Abhisit came to power in December 2008, after months of massive anti-Thaksin protests by demonstrators known as the Yellow Shirts.   

The military said soldiers will use tear gas, rubber bullets and live ammunition, if necessary, to remove the protesters.  But Army Chief General Anupong Paochinda has been reluctant to use arms fearing renewed bloodshed.  

An April 10 clash between the Red Shirts and Thai security forces left at least 25 people dead, and 850 others injured.
\end{quote}
Reference document:
\begin{quote}
Bangkok blasts kill one, injure 75 - Thai media
11:16am EDT

BANGKOK, April 23 (Reuters) - A series of grenade blasts that rocked Bangkok's business district on Friday killed at least one person and wounded 75, hospitals and the Thai media said.

Five M-79 grenades hit an area packed with heavily armed troops and studded with banks, office towers and hotels. Four of the wounded had serious injuries, including two foreigners, according to witnesses, hospital officials and an army spokesman. (Additional reporting by Nopporn Wong-Anan; Writing by Jason Szep; Editing by Bill Tarrant)

\end{quote}
Summary:
\begin{quote}
\emph {1 Dead in Bangkok Protests, More Than 70 Wounded}

\emph {VOA News22 April 2010}
\emph {A Thai woman lies injured on the ground in Bangkok after several small explosions occurred near site of anti-government protests in Bangkok, 22 Apr 2010}
\emph {Photo: AP}

\emph {A Thai woman lies injured on the ground in Bangkok after several small explosions occurred near site of anti-government protests in Bangkok, 22 Apr 2010}

\emph {Hospitals in Thailand's capital, Bangkok, say at least one person has been killed and many others wounded in a series of explosions near an encampment of anti-government protesters . Bangkok blasts kill one, injure 75 - Thai media}
\emph {11:16am EDT }

BANGKOK, April 23 (Reuters) - A series of grenade blasts that rocked Bangkok's business district on Friday killed at least one person and wounded 75, hospitals and the Thai media said .
\end{quote}

\subsection {Results Excerpt - Topic Detection and Tracking applying Interview algorithm with Recursive Gloss Overlap}
Example Reference News Story (ThesisDemo-ipad-test1) - Topic 'ipad':
\begin{quote}
Apple unveils iAd platform; iPad sales look strong
Photo
6:29am EDT

By Gabriel Madway

CUPERTINO, California (Reuters) - Apple CEO Steve Jobs showed off a new smartphone operating system on Thursday that features an advertising platform to compete with Google's, and revealed stronger-than-expected sales of 450,000 units for the iPad.

The iPhone 4.0 software will be available on Apple's hugely popular smartphone this summer, complete with a number of upgrades, including a long-awaited multi-tasking capability that allows the use of several applications at once.

A version of the iPhone's operating system is also used on the iPad, and the latest generation of software will come to Apple's new tablet computer this fall.

The new advertising platform for the iPhone and iPad -- dubbed iAd -- marks Apple's first foray into a small but growing market, and is sure to please the thousands of application developers who make their living off those devices, providing them with a new revenue stream.

The iPad's early sales impressed analysts, many of whom expect 1 million units to be sold in the quarter ending June, and roughly 5 million in 2010, though estimates vary widely.

"We're making them as fast as we can. Our ramp is going well, but evidently we can't quite make enough of them yet so we're going to have to try harder," Jobs said, noting iPad sellouts at Best Buy stores.

The electronics giant has staked its reputation on the 9.7-inch touchscreen tablet, essentially a cross between a smartphone and a laptop. It is helping foster a market for tablet computers that is expected to grow to as many as 50 million units by 2014, according to analysts.

"I think it's pretty impressive, five days almost half a million units, and it shows there's still pretty good momentum behind the first day," said Gartner analyst Van Baker.

Despite critics who question whether a true need exists for such a gadget, analysts expect Hewlett-Packard, Dell and others to trot out their own competing devices this year.

Since the iPad went on sale on April 3, users have downloaded 600,000 digital books and 3.5 million applications for the device, Jobs said. There are already 3,500 apps available for the iPad.

"It was above my expectations, frankly," said Joe Clark, managing partner of Financial Enhancement Group, referring to iPad sales. "The day the original Apps Store launched it was a game change for the iPhone and it will do the same eventually for the iPad."

At a media event at the company's Cupertino, California, headquarters, Jobs said Apple had so far sold more than 50 million iPhones, the smartphone that competes with Research in Motion's Blackberry and Motorola's Droid.

That implies that the company sold 7 million or more devices in the March quarter, which would be above many analysts' forecasts.

MOBILE AD WAR

Apple is expected to launch the fourth-generation model of its iPhone, which was introduced in 2007, later this year.

Pancreatic cancer survivor Jobs, looking thin but energetic, introduced the iAd mobile platform, which he said had the opportunity to make 1 billion ad impressions a day on tens of millions of Apple mobile device users.

IAds will allow applications developers to use advertisements in their apps, pocketing 60 percent of the revenue. Apple will sell and host the ads.

Jobs harshly criticized the current manner and look of mobile advertising, particularly search ads. He promised that iAds will foster more engaging advertising that will not pull users away from the content within apps.

NEW ARENA

Tim Bajarin, president of consulting company Creative Strategies, said it was a dramatic shift in thinking about the delivery of mobile ads, and an obvious move by Apple to set itself apart from Google Inc, which made its name on search ads.

"It's very clear that Jobs believes that ads in the context of apps makes more sense than generic mobile search," he said.

Apple's entry into the mobile ad arena had been widely expected. This year, it paid \$270 million for Quattro Wireless, an advertising network that spans both mobile websites and smartphone applications.

Google, which already sells advertising on smartphones, agreed to buy mobile ad firm AdMob late in 2009. U.S. regulators are examining the deal's antitrust implications.

Jobs said Apple was also in the hunt to buy AdMob before Google "snatched them from us because they didn't want us to have them." The comments were just the latest hint at the rift that has emerged between Apple and Google, which were once allies but now compete in a number of arenas.

Research group Gartner expects the mobile advertising market to expand by 78 percent to \$1.6 billion in 2010.

Jobs also said the new operating system will include support for multi-tasking -- addressing a perennial consumer complaint -- allowing users to switch between several programs running simultaneously.

Shares of Apple turned positive briefly after Jobs' announcement, before quickly dipping back into negative territory. They closed 0.3 percent lower at \$239.94 on the Nasdaq.
(Writing by Edwin Chan; Editing by Steve Orlofsky, Leslie Gevirtz and Matthew Lewis)
\end{quote}
Example Candidate News Story (ThesisDemo-dantewada-test1) - Topic 'Dantewada':
\begin{quote}
Chidambaram offers to quit, PM says no
CNN-IBN

New Delhi: Union Home Minister P Chidambaram has reportedly offered to resign taking responsibility for the Dantewada massacre in which the Maoists butchered 76 security personnel. However, Prime Minister Manmohan Singh has rejected Chidambaram's offers to resign.

Chidambaram reportedly met Prime Minister Manmohan Singh taking "full responsibility" for the Dantewada attack on a CRPF patrol party and offered to step down.

Sources say that a section within the Congress party has been unhappy with the way Chidambaram has handled the Maoist issue so far including his tough talk on the rebels and his friction with West Bengal Chief Minister Buddhadeb Bhattacharjee.

When the Home Minister came back from his Dantewada tour he reportedly met the Prime Minister and told him that if he (Prime Minister) was dissatisfied with his performance, he was ready to step down.

Earlier, on Friday while attending the Valour Day function of the CRPF, Chidambaram said, "I salute the CRPF. I promise that government will always stand by you. Where does the buck stop after Dantewada? The buck stops at my desk. I accept full responsibility of what happened in Dantewada. I told this to the Prime Minister as well."

In the past whenever there was talk of all-out action to control Maoists, it was usually tempered by the Congress party that the issue it was a socio-economic one and had to be dealt with in such a manner.

Many leaders had been maintaining that Maoists were our own people and so they should not be dealt with in the same firm way as one would deal with terrorists.

But now that stand seems to have been diluted a bit within the Congress following the massacre of the CRPF team on Tuesday.

The party, too, has begun to realise that to go soft in the weeding out Maoists will not really go down very well.

So Chidambaram's offer to step down accepting complete responsibility is seen as a political masterstroke by him and also a plus point in his favour.

A large group of Maoists had ambushed the CPRF team belonging to the 62 Battalion between 6 AM and 7 AM on Tuesday between Chintalnar and Tademetla villages in Sukma block of Dantewada district of Chhattigarh when the security personnel were on the way to Tademetla in a vehicle.

The Maoists, believed to be between 200-1000, first triggered a land mine destroying the vehicle carrying the security personnel. In the ensuing gunbattle 75 CRPF men and one local police constable were killed.

\end{quote}
Topic Link Detection(to check if above two news stories discuss same topic):
\begin{itemize}
\item Interview score:3.09090909091
\item Interview score(in percentage correctness): 10.303030303
\item Edit Distance (as percentage value addition from reference):79.4014084507
\item Topic Link Detection - ThesisDemo-ipad-test1.txt and ThesisDemo-dantewada-test1.txt do not discuss same topic 
\end{itemize}
Topic Detection(to check if a news story is under correct topic) - Topics are 'ipad'(1 story), 'sukhna'(2 stories), 'lufthansa'(1 story),'dantewada'(1 story):
\begin{itemize}
\item Topic Detection - News story ThesisDemo-ipad-test1.txt has largest pairwise editdistance from ThesisDemo-sukhna-test1.txt and least likely to be in this topic
\item Topic Detection - News story ThesisDemo-ipad-test1.txt has largest pairwise editdistance from ThesisDemo-lufthansa-test1.txt and least likely to be in this topic
\item Topic Detection - News story ThesisDemo-ipad-test1.txt has largest pairwise editdistance from ThesisDemo-sukhna-test2.txt and least likely to be in this topic
\item Topic Detection - News story ThesisDemo-ipad-test1.txt has largest pairwise editdistance from ThesisDemo-dantewada-test1.txt and least likely to be in this topic
\item Topic Detection - News story ThesisDemo-dantewada-test1.txt has largest pairwise editdistance from ThesisDemo-ipad-test1.txt and least likely to be in this topic
\end{itemize}

\subsection {Results Excerpt - Applying Sentiment Analysis with Recursive Gloss Overlap for finding polarity of an edge in Citation Graph Maxflow (1)}
\begin{itemize}
\item Nodes with more than 1 parent (and hence the most likely classes of document) are: set(['good', 'used']) 
\item getPositivity: good 
\item getNegativity: good 
\item getPositivity: used 
\item getNegativity: used 
\item negative words: [] 
\item positive words: ['good', 'used'] 
\end{itemize}

\subsection {Results Excerpt - Citation graph maxflow with a simple link graph example - polarity determined by Recursive Gloss Overlap}
\begin{itemize}
\item Following is the adjacency matrix for hyperlink graph amongst 7 html documents (file1.html, file2.html, file3.html, file4.html, file5.html, file6.html, file7.html)
\begin{equation*}
Adjacency \quad Matrix \quad of \quad Link \quad Graph = \quad
\begin{bmatrix}
0 & 1  & 1 & 1 & 1 & 1 & 0 \\
1 & 0  & 1 & 1 & 0 & 1 & 0 \\
1 & 1  & 0 & 1 & 1 & 0 & 1 \\
-1 & 1  & 1 & 0 & -1 & 1 & 0 \\
1 & 1  & 0 & 1 & 0 & 1 & 1 \\
0 & 0  & 0 & 0 & 0 & 0 & 0 \\
0 & 0  & 0 & 0 & 0 & 0 & 0 
\end{bmatrix}
\end{equation*}
\item Average concept maxflow out of each page: 
\begin{itemize} 
\item 'file2.html': 3.7142857142857144
\item 'file4.html': 3.2857142857142856
\item 'file5.html': 2.0
\item 'file3.html': 3.4285714285714284
\item 'file7.html': 0.0
\item 'file1.html': 3.4285714285714284
\item 'file6.html': 0.0
\end{itemize}
\item Number of nodes within radius 2 of the source  file2.html =  7
\item Number of nodes within radius 2 of the source  file3.html =  7
\item Number of nodes within radius 2 of the source  file4.html =  7
\item Number of nodes within radius 2 of the source  file5.html =  6
\item Number of nodes within radius 2 of the source  file1.html =  7
\item Number of nodes within radius 2 of the source  file6.html =  0
\item Number of nodes within radius 2 of the source  file7.html =  0
\end{itemize}
Average maxflow values above show that file2, file1, file3, file4, file5, file6, file7 are ranked in descending order of their Maxflow merit.Thus file2 is most influential in this community. The radius measure with length 2 gives the ranking: file2, file3, file4, file1 all tied in first and file5 ranked next.
\subsection {Conclusion}
Motivation for this excercise, as evident from results above, is to explore the possibility of finding an algorithm/framework to assess the merit of a document with and without link graph structure in place with greater emphasis on the latter. Citation graph maxflow measures the penetration of a concept (represented in a document), in a link graph while the Recursive gloss overlap objectively judges the document without getting inputs from any incoming links. Interview algorithm uses either of these two algorithms and abstracts some real world applications.As seen above Google rankings differ (with some exceptions) from Recursive gloss overlap intrinsic merit rankings which is on the expected lines - content-and-complexity based merit scoring is not necessarily same as popularity based ranking. Moreover the intrinsic ranking scheme given above need not be the only possible way of computing merit. Once we have definition graph for a document (whether multipartite or not), multitude of more ranking schemes can be invented - for example based on 1) k-connectedness of the definition graph 2) completeness or robustness of the definition graph 3) (multipartite) cliques of (multipartite) definition graph (if multipartite) etc., Since definition graph construction is computationally intensive, there is a scope of improvement in improving the recursive gloss overlap algorithm by applying some parallel processing framework like MapReduce. Applying Evocation WordNet, implementing a MapReduce(e.g Hadoop) cluster and considering more than one relation are future directions to think about. Theoretical foundation for the recursive gloss overlap comes from WordNet itself which visualises the relatedness of words - Definition graph is just an induced subgraph of WordNet for a document. Since merit quantification of a document can not be done without analyzing relatedness of keywords in a document, definition graph, which is a subgraph of WordNet for a document, is a plausible representation.  Accuracy of Recursive gloss overlap depends on the accuracy of WordNet, depth to which definition trees are grown and Word Sense Disambiguation.

\newpage

\end{document}